\newcommand\version{July~6, 2026}
\newtheorem{theorem}{Theorem}
\newtheorem{lemma}{Lemma}
\newtheorem{corollary}{Corollary}
\newtheorem{assumption}{Assumption}
\theoremstyle{definition}
\theoremstyle{remark}
\newtheorem{remark}{Remark}
\numberwithin{equation}{section}
\newcommand{\Cc}{\mathbb{C}}
\newcommand{\eps}{\epsilon}
\renewcommand{\epsilon}{\varepsilon}
\renewcommand{\phi}{\varphi}
\newcommand{\R}{\mathbb{R}}
\newcommand{\di}{\textnormal{d}}
\newcommand{\mi}{\textnormal{i}}
\newcommand{\eu}{\textnormal{e}}
\DeclareMathOperator{\infspec}{inf\, spec}
\begin{document}

\title[Dyson expansion for form-bounded perturbations ...]{Dyson expansion for form-bounded perturbations, and applications to the polaron problem}

\author{Davide Desio} 
\address{Institute of Science and Technology Austria, Am Campus 1, 3400 Klosterneuburg, Austria}
\email{davide.desio@ist.ac.at}

\author{Robert Seiringer}
\address{Institute of Science and Technology Austria, Am Campus 1, 3400 Klosterneuburg, Austria}
\email{robert.seiringer@ist.ac.at}

\thanks{\copyright\, 2026 by the authors. This paper may be  
reproduced, in
its entirety, for non-commercial purposes.}

\begin{abstract} 
We present an abstract Dyson expansion for perturbations that are merely relatively form-bounded, and apply it to the polaron problem. For a large class of polaron-type models, including the Fr\"ohlich and Nelson models, we prove that the vacuum expectation value of the heat semi-group is a completely monotone function of the square of the total momentum. Consequently, the ground state energy is a concave function of the square of the momentum, a result recently proved for the Fr\"ohlich model in \cite{polzer} using a probabilistic approach via Wiener integrals. 
\end{abstract}

\date{\version}

\maketitle

%%%%%%%%%%%%%%%%%%%%%%%%%%%%%%%%%%%%%%%%%%%%%%%%%%%%%%%%%%%%%%%%%%%%%%

\section{Introduction and Main results}

The polaron problem concerns the interaction of a charged particle with the  excitations of a medium modeled by a bosonic field. Examples include the Fr\"ohlich model \cite{frohlich} describing  the coupling between an electron and the phonon modes of a polarized crystal, or the Nelson model \cite{nelson} of quantum electrodynamics. 
In this paper, we shall consider a general class of polaron Hamiltonians, formally given by 
\begin{equation}\label{def:ham}
\mathbb{H}(P) =  \left| P - \mathbb{P}_f \right|^2 + \di\Gamma(\omega) + \Phi(v) 
\end{equation}
for $P\in \R^d$ corresponding to the total momentum of the system. The operators \eqref{def:ham} act on the bosonic Fock space
$\mathfrak{F}_{\rm s}(L^2(\R^d))=\Cc\oplus \bigoplus_{N=1}^{\infty}L^2_{\rm s}(\R^{dN})$, where $L^2_{\rm s}(\R^{dN})\subset L^2(\R^{dN})$ denotes the subspace of permutation-invariant functions of $N$ (momentum) variables $k_1,\dots,k_N \in \R^d$. Here, the field energy $\di\Gamma(\omega):=\int_{\R^d}\omega(k)\,a^*_ka_k\,\di k$ is the second quantization of a non-negative Fourier multiplier $\omega$ and $\mathbb{P}_f:=\di\Gamma(k)$ is the field momentum operator. The interaction of the particle with the field modes is described by the  field operator $\Phi(v)= a^\dagger(v) + a(v)$, where $a^\dagger(f)$ and $a(f)$  are, for $f\in L^2(\R^d)$, the usual creation and annihilation operators, satisfying the canonical commutation relations $[a(f),a^\dagger(g)] = \langle f| g\rangle$. The Hamiltonians \eqref{def:ham} result from a restriction of the system to the sector of total momentum $P$, applying the Lee--Low--Pines transformation \cite{LLP53}.

The usual Fr\"ohlich model corresponds to $d=3$, $\omega \equiv 1$, and $v(k) = \sqrt{\alpha/(2\pi^2)} |k|^{-1}$ for some coupling constant $\alpha >0 $. For the Nelson model $\omega(k) = |k|$ (or, more generally, $\omega(k) = \sqrt{|k|^2 + m^2}$ for $m\geq 0$) and $v(k)$ is proportional to $|k|^{-1/2}$, which needs to be suitably cut off at large momentum to avoid an ultraviolet divergence, however. In general, the Hamiltonian \eqref{def:ham} is well-defined as a quadratic form, and bounded from below, under the following assumptions on $v$ and $\omega$.

\begin{assumption}\label{ass1}
We assume that $\omega:\R^d \to \R_+$, and that the interaction $v$ is of the form 
$v= v_1 + v_2$ with
\begin{equation}\label{LYcond}
\int_{\R^d} \frac{|v_1(k)|^2}{\omega(k)} \di k < \infty \quad \text{and} \quad \int_{\R^d} \frac{|v_2(k)|^2}{|k|^2}\left( 1+ \frac 1{\omega(k)}\right) \di k < \infty\,.
\end{equation}
\end{assumption}

The first condition in~\eqref{LYcond} can be viewed as a condition on infrared regularity, while the second concerns the ultraviolet behavior of $v$. 
Under Assumption~\ref{ass1},  the commutator method of Lieb--Yamazaki \cite{LY} shows that $\Phi(v)$ is infinitesimally form-bounded with respect to $|P- \mathbb{P}_f|^2 + \di\Gamma(\omega)$, and hence 
$\mathbb{H}(P)$ defines a semi-bounded self-adjoint operator via its quadratic form \cite[Thm.~X.17]{RS}. For completeness, we shall repeat this argument in Appendix~\ref{appendix:B}.  If $\omega(k)\equiv 1$, the condition \eqref{LYcond} becomes $\int_{\R^d} |v(k)|^2 (1+|k|^2)^{-1} \di k < \infty$, and is satisfied for $v(k) =g |k|^{-\beta}$ for $(d-2)/2<\beta<d/2$ and $g\in\Cc$, for instance. If $\omega(k) = |k|$, on the other hand, $(d-2)/2<\beta<(d-1)/2$ would be admissible. 

To state our main results, we need to introduce some notation. For $n\geq 1$, let $S_{2n}$ denote the set of permutations of $2n$ elements, and let  $\mathcal{W}_{2n}\subset S_{2n}$ be the set of Wick pairings, i.e., those permutations satisfying  
 $\pi(2j-1) < \pi(2j+1)$ for $j\in\{1,\dots,n-1\}$ and $\pi(2j-1) < \pi(2j)$ for $j\in\{1,\dots,n\}$. Permutations in $\mathcal{W}_{2n}$ are in one-to-one correspondence with partitions into sets of cardinality two, i.e.,  $n$ pairings $(\pi(2i-1),\pi(2i))$, $1\leq i \leq n$, of the numbers $\{1,\dots,2n\}$. 
Moreover, we shall denote by $\mathcal{W}_{2n}^0 \subset \mathcal{W}_{2n}$ those permutations that correspond to pairings that interlace, i.e., cannot be decomposed into pairings within disjoint subintervals; explicitly, $\pi \in \mathcal{W}_{2n}$ is an element of $\mathcal{W}_{2n}^0$ if and only if $\sum_{j=1}^\ell (-1)^{\pi^{-1}(j)} < 0$ for all $1\leq \ell < 2n$. 

For $\pi \in \mathcal{W}_{2n}$ and $1\leq j \leq 2n-1$, let 
\begin{equation}\label{def:M}
\mathcal{M}^\pi_j  = \{ 1\leq i \leq n \, : \, \pi(2i-1) \leq j < \pi(2i)\} \subset\{1 ,\dots, n\}
\end{equation}
corresponding to the set of pairs whose connecting line crosses $j$ (which may be empty). For convenience we shall also introduce  $\mathcal{M}^\pi_0 = \mathcal{M}^\pi_{2n} =\emptyset$. 
For $\underline k = (k_1,\dots,k_n) \in \R^{dn}$ and $0\leq j\leq 2n$, let 
\begin{equation}\label{def:b}
\mathscr{E}^{(\pi,j)}_{P}(\underline k) =   \Big| P - \sum\nolimits_{\ell \in \mathcal{M}^\pi_j}  k_\ell \Big|^2 +  \sum\nolimits_{\ell \in \mathcal{M}^\pi_j } \omega(k_\ell)\,.
\end{equation}

We shall denote by $\Delta_k^t$ the simplex 
\begin{equation}\label{def:simplex}
\Delta_{k}^t = \{(t_0,\dots,t_{k})  = \underline t \in \R_+^{k+1}, \ \sum_{i=0}^{k} t_i = t\} \,.
\end{equation}
Finally, we shall use the notation $\Omega = (1,0,0,\dots) \in\mathfrak{F}_{\rm s}(L^2(\R^d))$ for the vacuum vector. 

Our first main result is summarized in the following Theorem. 

\begin{theorem}\label{thm:1}
For $\pi \in \mathcal{W}_{2n}$, the set of Wick pairings defined above, let $\mathscr{E}^{(\pi,j)}_{P}$ be given in \eqref{def:b}. Under Assumption~\ref{ass1}, the following holds:

\noindent (a) For $t>0$, we have the convergent expansion
\begin{equation}\label{id1}
\langle\Omega | \eu^{-t \mathbb{H}(P) } \Omega \rangle = \eu^{-t |P|^2} + \sum_{n\geq 1} \sum_{\pi \in \mathcal{W}_{2n}}  \int_{\Delta_{2n}^t} \int_{\R^{dn} }  \eu^{-\sum_{j=0}^{2n} t_j \mathscr{E}^{(\pi,j)}_{P}(\underline k) }  \prod_{j=1}^n |v(k_j)|^2  \di k_j \,\di\underline t 
\end{equation}
{}where %$\underline t = (t_0,\dots,t_{2n}) \in \R_+^{2n+1}$ and 
the outer integral is over the simplex $\Delta_{2n}^t$ defined in \eqref{def:simplex}. % = \{ \underline t \in \R_+^{2n+1}, \ \sum_{i=0}^{2n} t_i = t\}$.

\noindent (b) For $t>0$, we further have the renewal equation
\begin{align}\nonumber 
&\langle\Omega | \eu^{-t \mathbb{H}(P) } \Omega \rangle  =  \eu^{-t |P|^2}  \\ & +  \int_0^t  \langle\Omega | \eu^{-(t-s) \mathbb{H}(P) } \Omega \rangle  
\sum_{n\geq 1} \sum_{\pi \in \mathcal{W}^0_{2n}}  \int_{\Delta_{2n-1}^s} \int_{\R^{dn} }  \eu^{-\sum_{j=0}^{2n-1} t_j \mathscr{E}^{(\pi,j)}_{P}(\underline k) }  \prod_{j=1}^n |v(k_j)|^2  \di k_j \,\di\underline t \, \di s
\label{id2}
\end{align}
where $\mathcal{W}^0_{2n} \subset \mathcal{W}_{2n}$ are the interlacing Wick pairings defined above.
\end{theorem}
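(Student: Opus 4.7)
For part (a), I would apply the abstract Dyson expansion of the preceding section to the decomposition $\mathbb{H}(P) = H_0(P) + \Phi(v)$ with $H_0(P) = |P-\mathbb{P}_f|^2+\di\Gamma(\omega)$, obtaining
\begin{equation*}
\langle\Omega|\eu^{-t\mathbb{H}(P)}\Omega\rangle = \sum_{N\geq 0}(-1)^N\int_{\Delta_N^t}\langle\Omega|\eu^{-t_0 H_0(P)}\Phi(v)\cdots\Phi(v)\eu^{-t_N H_0(P)}\Omega\rangle\,\di\underline t,
\end{equation*}
the series converging absolutely by Assumption~\ref{ass1} together with the Lieb--Yamazaki type bounds recalled in Appendix~\ref{appendix:B}. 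Expanding each $\Phi(v) = a^\dagger(v) + a(v)$ in momentum modes and using $H_0(P)\Omega=|P|^2\Omega$ together with particle-number conservation by $H_0(P)$, only terms with equally many creations and annihilations survive, forcing $N = 2n$. Applying Wick's theorem to the vacuum matrix element of the $2n$-fold product, I would identify each surviving contribution with a Wick pairing $\pi \in \mathcal{W}_{2n}$: the operator $V_{\pi(2i)}$ is realised as $a^\dagger(v)$ (creating the particle labelled $k_i$ with amplitude $v(k_i)$) and $V_{\pi(2i-1)}$ as $a(v)$ (annihilating it with amplitude $\overline{v(k_i)}$), so the constraint $\pi(2i-1) < \pi(2i)$ encodes that the creation precedes the annihilation in time-order on $\Omega$. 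Between operator positions $j$ and $j+1$ the alive particles are exactly those with $\pi(2i-1)\leq j<\pi(2i)$, i.e.\ those in $\mathcal{M}^\pi_j$, so the intervening $\eu^{-t_j H_0(P)}$ produces the factor $\eu^{-t_j \mathscr{E}^{(\pi,j)}_P(\underline k)}$, while the momentum integrals carry $\prod_i|v(k_i)|^2$. Combining, and noting $(-1)^{2n}=1$ together with the $n=0$ contribution $\eu^{-t|P|^2}$, yields \eqref{id1}.

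For part (b), I would group the sum in \eqref{id1} by the largest $\ell^\ast\in\{0,\dots,2n-2\}$ at which $\mathcal{M}^\pi_{\ell^\ast}=\emptyset$; setting $2m=2n-\ell^\ast\geq 2$, no pair of $\pi$ straddles $\ell^\ast$, so the restriction of $\pi$ to $\{\ell^\ast+1,\dots,2n\}$ (relabelled to $\{1,\dots,2m\}$) is an interlacing pairing $\pi'\in\mathcal{W}^0_{2m}$, and the restriction to $\{1,\dots,\ell^\ast\}$ is an arbitrary Wick pairing $\pi''\in\mathcal{W}_{\ell^\ast}$. Correspondingly, $\mathscr{E}^{(\pi,j)}_P$ depends only on $\pi''$ for $j<\ell^\ast$, equals $|P|^2$ at $j=\ell^\ast$, and depends only on $\pi'$ for $j>\ell^\ast$, so the integrand and momentum integrations factorise over the two blocks. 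Writing $s=t_{\ell^\ast+1}+\cdots+t_{2n}$ and assigning the bridging time $t_{\ell^\ast}$ to the rest block (so the rest uses $t_0,\dots,t_{\ell^\ast}$, summing to $t-s$, and the late block uses $t_{\ell^\ast+1},\dots,t_{2n}$, summing to $s$), Fubini in $s$ rearranges \eqref{id1}: summing the rest-block contributions over $(n-m,\pi'')$ reconstructs $\langle\Omega|\eu^{-(t-s)\mathbb{H}(P)}\Omega\rangle$ by reapplying \eqref{id1} (the empty-rest case $\ell^\ast=0$ providing the $\eu^{-(t-s)|P|^2}$ term), while the sum over $(m,\pi')$ yields the inner kernel of \eqref{id2}. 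This gives \eqref{id2}.

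The main obstacle is the first step: establishing the Dyson expansion under the minimal hypothesis that $\Phi(v)$ is only (infinitesimally) form-bounded with respect to $H_0(P)$, so that the series above converges absolutely and genuinely represents the semigroup matrix element. This is precisely what the abstract Dyson expansion of the preceding section is designed to deliver; once that is in place, the remainder of the argument is the combinatorial reorganisation sketched above.
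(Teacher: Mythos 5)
Your combinatorial analysis — identifying surviving terms with Wick pairings, reading off $\mathcal{M}^\pi_j$ from the alive particles, and for part (b) decomposing by the last return of the Dyck path to the origin — matches the paper's reasoning. The gap is in the starting point for (a). You quote the abstract Dyson expansion in naive time-ordered form, but Section~\ref{sec:dyson} explicitly \emph{does not} deliver this: the paper warns that under mere form-boundedness ``the individual terms in the expansion cannot necessarily be written in the form \eqref{abstract:dyson}'', and what the abstract theorem actually provides is the contour-integral representation
\begin{equation*}
\eu^{-t(A+B)} = \eu^{-tA} + \sum_{n\geq 1} D_n, \qquad D_n = \frac{1}{2\pi\mi}\int_{\Gamma_{\gamma,\kappa}} \eu^{-tz}\, \frac{A^{1/2}}{z-A}\Bigl(C\,\frac{A}{z-A}\Bigr)^{n-1} C\, \frac{A^{1/2}}{z-A}\,\di z
\end{equation*}
with $C=A^{-1/2}BA^{-1/2}$ bounded. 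The time-ordered terms $\langle\Omega|\eu^{-t_0 H_0}\Phi(v)\cdots\Phi(v)\eu^{-t_{2n}H_0}\Omega\rangle$ require care even to define when $v\notin L^2(\R^d)$ (e.g.\ $a^\dagger(v)\Omega\notin\mathfrak F_{\rm s}(L^2(\R^d))$ for the Fr\"ohlich $v$; only the adjacent semigroup factors smooth this out, and integrability near the boundary of the simplex is nontrivial), and equality of their sum with $\langle\Omega|\eu^{-t\mathbb H(P)}\Omega\rangle$ is precisely what has to be proved. The paper bridges the gap by specialising $D_{2n}$ to the vacuum, expanding $C=C^++C^-$ into Dyck paths, computing $\Lambda_z(\nu)$ via the heat-kernel representation of $(z-A)^{-1}$ and Wick's rule for nice $v$ and then approximating (Lemma~\ref{key:lem2}), proving uniform integrability along the contour (Lemma~\ref{key:lem}), and only at the end converting the contour integral to the simplex integral via \eqref{46}. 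In short, the time-ordered formula \eqref{id1} is the \emph{output} of the proof, not an input one may cite from Section~\ref{sec:dyson}.

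For part (b), your decomposition at the largest $\ell^\ast$ with $\mathcal{M}^\pi_{\ell^\ast}=\emptyset$ is the same as the paper's factorisation $\nu=\{\tilde\nu,\nu_0\}$ with $\nu_0\in\mathcal D^0$; the paper implements it in the $z$-variable, using the multiplicativity $\Lambda_z(\{\nu,\nu'\})=\Lambda_z(\nu)\Lambda_z(\nu')$ together with a Laplace-convolution identity for contour integrals, rather than by Fubini on the simplex. Once part (a) is rigorously in hand, the two are equivalent, and your bookkeeping of the bridging time $t_{\ell^\ast}$ and the index shift between $\{1,\dots,2m\}$ and $\{0,\dots,2m-1\}$ is correct.
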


The identity \eqref{id1} formally follows from a Dyson expansion of $\eu^{- t\mathbb{H}(P)}$ in terms of the interaction $\Phi(v)$, but it is a priori not clear how to make sense of such an expansion in the low-regularity setting considered here, where $\Phi(v)$ is unbounded and merely form-bounded relative to the non-interacting Hamiltonian. An abstract result applicable to this setting will be presented in Section~\ref{sec:dyson}, and will be the starting point in the proof of Theorem~\ref{thm:1}. Part (b) of the theorem is inspired by \cite{polzer}, where a renewal equation for $\langle\Omega | \eu^{-t \mathbb{H}(P) } \Omega \rangle$ was derived in the Fr\"ohlich case, using a closely related probabilistic approach via Wiener integrals.

Theorem~\ref{thm:1} turns out to be very useful in studying the momentum dependence $P \mapsto \langle\Omega | \eu^{-t \mathbb{H}(P) } \Omega \rangle$ for fixed $t>0$. We shall need the following additional assumption on $v$ and $\omega$. Recall that, according to Bernstein's theorem \cite{dono}, a function on $\R_+$ is completely monotone if and only if it is the Laplace transform of a positive measure.

\begin{assumption}\label{ass2}
Assume  
that $\omega$ and $v$ are radial, and that the function
\begin{equation}\label{cma}
|P|^2 \mapsto \int_{\R^d} |v(k)|^2 \eu^{-r |P-k|^2 - s \omega(k)} \di k 
\end{equation}
is completely monotone 
for any $r>0$ and $s>0$. 
\end{assumption}

For $\omega\equiv 1$, this additional assumption is fulfilled for $v(k) = g |k|^{-\beta}$ for any $\beta \geq 0$, 
as one can see by writing $|v(k)|^2$ as an integral over Gaussians, 
\begin{equation}
\Gamma(\beta) |k|^{-2\beta} = \int_0^\infty \eu^{-|k|^2 x } x^{\beta -1 }  \, {\di x}
\end{equation}
for $\beta>0$, 
and noting that the convolution of two Gaussians is again a Gaussian. The same applies to $\omega(k) = \sqrt{|k|^2+m^2}$ for $m\geq 0$, since in this case also $\eu^{-s \omega(k)}$ can be written as an average over Gaussians:
\begin{equation}
\eu^{-s \sqrt{|k|^2 + m^2}} = \int_0^\infty  \frac{\eu^{-x}}{\sqrt{ \pi x}} \eu^{-\frac {s^2}{4x} (|k|^2 + m^2)} \, {\di x}\,.
\end{equation}
 The assumption is thus satisfied for  both the Fr\"ohlich and the Nelson models, if in the latter case the ultraviolet cutoff is chosen in a suitable way to respect the complete positivity of \eqref{cma}, e.g., via a Gaussian factor as $v(k) = g |k|^{-1/2} \eu^{-\eps |k|^2}$ for $\eps>0$. Since the cutoff can be removed, i.e., the limit $\eps\to 0$ taken, after the subtraction of a diverging (but $P$-independent) constant, the following consequently also holds for the renormalized Nelson model.

\begin{corollary}\label{cor}
Under Assumptions~\ref{ass1} and~\ref{ass2}, the following holds:

\noindent (a) $|P|^2 \mapsto \langle\Omega| \eu^{-t \mathbb{H}(P)} \Omega\rangle$ is completely monotone for any $t>0$.

\noindent (b) If $v\not \equiv 0$, then $|P|^2 \mapsto E_0(P) = \infspec \mathbb{H}(P)$  is  concave, and in fact strictly concave on the set of $P$s where $\mathbb{H}(P)$ has a ground state.
\end{corollary}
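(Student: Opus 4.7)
The plan for (a) is to prove, by induction on $n$, a slightly more general claim than Theorem~\ref{thm:1}(a) requires: for every Wick pairing $\pi \in \mathcal{W}_{2n}$, every non-negative $a_0,\dots,a_{2n}$, and every non-negative $b_1,\dots,b_n$,
\begin{equation*}
\int_{\R^{dn}} \prod_{j=0}^{2n} \eu^{-a_j |P - Q^\pi_j(\underline k)|^2} \prod_{\ell=1}^n |v(k_\ell)|^2 \eu^{-b_\ell \omega(k_\ell)} \di\underline k, \qquad Q^\pi_j(\underline k) := \sum_{\ell \in \mathcal{M}^\pi_j} k_\ell,
\end{equation*}
is a completely monotone function of $|P|^2$. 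Theorem~\ref{thm:1}(a) is then recovered by specializing $a_j = t_j$, $b_\ell = \sum_{j\colon \ell\in\mathcal{M}^\pi_j} t_j$, integrating over $\underline t \in \Delta_{2n}^t$, and summing over $\pi$ and $n$ --- all positive operations preserving complete monotonicity. The case $n=1$ is immediate from Assumption~\ref{ass2}.

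For the inductive step, pick an adjacent pair $(\pi(2i-1),\pi(2i)) = (m,m+1)$ of $\pi$, which always exists. The key combinatorial observation is that, as $i$ is the only pair with an endpoint in $\{m,m+1\}$, no other pair enters or leaves $\mathcal{M}^\pi_j$ as $j$ moves across $\{m-1,m,m+1\}$, so
\begin{equation*}
\mathcal{M}^\pi_{m-1} = \mathcal{M}^\pi_{m+1} = \mathcal{M}^\pi_m \setminus \{i\}.
\end{equation*}
With $P' := P - \sum_{\ell \in \mathcal{M}^\pi_{m-1}} k_\ell$, integrating out $k_i$ first and using Assumption~\ref{ass2},
\begin{equation*}
\int_{\R^d} |v(k_i)|^2 \eu^{-a_m|P'-k_i|^2 - b_i \omega(k_i)} \di k_i = \int_0^\infty \eu^{-u|P'|^2} \di\nu_{a_m,b_i}(u),
\end{equation*}
and combining this with the factors at $j=m-1$ and $j=m+1$, which share the same shift $P'$, produces $\int_0^\infty \eu^{-(a_{m-1}+a_{m+1}+u)|P'|^2}\di\nu_{a_m,b_i}(u)$. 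What remains is precisely the generalized integrand for the reduced pairing $\tilde\pi \in \mathcal{W}_{2(n-1)}$ obtained by deleting pair $i$, with the single modified weight $\tilde a_{m-1} = a_{m-1}+a_{m+1}+u$ and all other $\tilde a_{\tilde j}$ and $b_\ell$ unchanged. The induction hypothesis and positivity of $\nu_{a_m,b_i}$ close the step.

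For (b), Bernstein's theorem and H\"older's inequality, applied to the representation from (a), show that $|P|^2 \mapsto -\tfrac{1}{t}\log\langle\Omega|\eu^{-t\mathbb{H}(P)}\Omega\rangle$ is concave for each $t>0$. Standard spectral theory gives $-\tfrac{1}{t}\log\langle\Omega|\eu^{-t\mathbb{H}(P)}\Omega\rangle \to E_0(P)$ as $t\to\infty$ (since $\Omega$ has nonvanishing spectral mass near the bottom of the spectrum of $\mathbb{H}(P)$), and concavity passes to the pointwise limit. For the strict statement on $\{E_0 < E_{\rm ess}\}$, I would invoke Theorem~\ref{thm:1}(b): the same inductive argument applied to the restricted sum over interlacing pairings shows that the renewal kernel $F(s,P)$ in \eqref{id2} is completely monotone in $|P|^2$, so $\tilde F(\lambda,P) := \int_0^\infty \eu^{-\lambda s}F(s,P)\di s$ admits a bivariate Laplace representation $\int_0^\infty \int_0^\infty \eu^{-\lambda s - u|P|^2}\di\nu_s(u)\di s$. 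Laplace-transforming \eqref{id2} in $t$ identifies, on $\{E_0 < E_{\rm ess}\}$, the dominant pole $\lambda = -E_0(P)$ of the resulting resolvent via the implicit equation $\tilde F(-E_0(P),P) = 1$. Implicit differentiation combined with the Cauchy--Schwarz inequality for the bivariate measure then gives that $|P|^2 \mapsto E_0(P)$ has non-positive second derivative, with strict negativity unless $u$ is almost surely proportional to $s$ on the joint support --- a degeneracy ruled out by the non-trivial $s$-dependence of $\nu_s$ inherited from the inductive construction.

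The step I expect to be most delicate is the last: rigorously confirming strict Cauchy--Schwarz (equivalently, non-degeneracy of the joint support of $(s,u)$) and justifying the implicit-function argument throughout $\{E_0<E_{\rm ess}\}$, in particular that $-E_0(P)$ is an isolated simple pole with non-vanishing residue. By contrast, the combinatorial identity $\mathcal{M}^\pi_{m-1}=\mathcal{M}^\pi_{m+1}$ that drives the induction in (a) is the cleanest new ingredient and should be essentially immediate to verify.
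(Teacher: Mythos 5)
Your inductive scheme for (a) rests on the claim that every Wick pairing $\pi\in\mathcal{W}_{2n}$ contains an \emph{adjacent} pair $(\pi(2i-1),\pi(2i))=(m,m+1)$. This is false. The set $\mathcal{W}_{2n}$ contains \emph{all} $(2n-1)!!$ pairings of $\{1,\dots,2n\}$, including crossing ones. Already for $n=2$, the pairing $\pi$ with pair~$1$ equal to $\{1,3\}$ and pair~$2$ equal to $\{2,4\}$ (which lies in $\mathcal{W}_{4}^{\nu}$ for the Dyck path $\nu=(-1,-1,+1,+1)$) has no pair with consecutive endpoints. For this $\pi$ one finds $\mathcal{M}^\pi_1=\{1\}$, $\mathcal{M}^\pi_2=\{1,2\}$, $\mathcal{M}^\pi_3=\{2\}$, so each $k_i$ enters two different factors $\mathscr{E}^{(\pi,j)}_P$ with \emph{different} shifts, and there is no single $k_i$ that can be integrated out against one Gaussian factor while leaving the remaining integrand in the same product form. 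Your inductive step therefore does not close for crossing pairings, and these cannot be discarded --- they contribute to \eqref{id1} with the same sign.

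What a correct argument must handle is that after completing the square in one $k$-variable, the residual exponent is a general rotation-invariant positive quadratic form in $P$ and the remaining $k_\ell$'s, no longer a sum of terms of the special shape $|P-Q^\pi_j(\underline k)|^2$. This is exactly why the paper proves a strictly more general inductive claim \eqref{murs2}: complete monotonicity of
\[
\int_{\R^{dn}}\eu^{-\mathscr{Q}(P,\underline k)}\prod_{j=1}^n\eu^{-s_j\omega(k_j)}|v(k_j)|^2\,\di k_j
\]
for an \emph{arbitrary} rotation-invariant non-negative quadratic form $\mathscr{Q}$. The induction then integrates variables one at a time via completing the square, and the resulting class of integrands is closed under that operation. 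Your induction, restricted to the special "nested" form, is not closed. If you enlarge your induction hypothesis to allow arbitrary quadratic couplings between $P$ and the remaining $k_\ell$'s (and drop the reliance on an adjacent pair), you essentially recover the paper's proof.

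For (b) the outline (CM $\Rightarrow$ log-convexity $\Rightarrow$ concavity of the limit, and use the renewal kernel for strictness) is in the right spirit, but the strict-concavity step via Laplace transform and an implicit-pole argument is not fully justified as stated --- in particular the non-degeneracy condition you flag and the simplicity/isolation of the pole are not established. The paper instead sends $t\to\infty$ directly in \eqref{id2}, uses positivity improving to get $\langle\Omega|\Phi_P\rangle\neq0$, and then applies the argument of \cite[Cor.~8]{polzer} to the resulting identity, which is a cleaner route. Neither route is available, however, until the gap in (a) is repaired.
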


The existence of a ground state of $\mathbb{H}(P)$ is guaranteed if $E_0(P) < E_{\rm ess}(P)$, where $E_{\rm ess}(P)$ denotes the bottom of the essential spectrum of $\mathbb{H}(P)$. Under additional regularity assumptions on $\omega$, it is known to have the explicit form \cite{Mol06}
\begin{equation}
E_{\rm ess}(P) = \inf_{n\geq 1}  \inf_{k_1,\dots,k_n} \left\{  E_0 \left(P- \sum\nolimits_{j=1}^n k_j \right) + \sum\nolimits_{j=1}^n \omega(k_j) \right\}\,.
\end{equation}
In the case of the Fr\"ohlich model, the infimum is attained at $n=1$ and $k_1=P$, hence $E_{\rm ess}(P) = E_0(0) + 1$ 
for any $P\in \R^d$. 

Since a completely monotone function is log-convex,  the concavity in (b) follows immediately from (a), using that 
\begin{equation}\label{limE}
 E_0(P) = - \lim_{t\to \infty}  \frac 1 t \ln \langle\Omega| \eu^{-t \mathbb{H}(P)} \Omega\rangle \,.
\end{equation}
If $v\equiv 0$, \eqref{limE} holds only whenever $E_0(P) < E_{\rm ess}(P)$, but if $v \not\equiv 0$, it actually holds for any $P\in \R^d$, by a Perron--Frobenius type argument \cite{JMS,KLVW,MiY10} (see also \cite{HH} and references there). 
To obtain strict concavity, the renewal equation of part (b) in Theorem~\ref{thm:1} will be useful, similarly as in \cite{polzer}. 

The concavity of $|P|^2\mapsto E_0(P)$ has played an important role in the recent proof of the validity of the Landau--Pekar formula for the effective mass of the Fr\"ohlich model in the strong coupling limit, see \cite{brooks,brooksS}.

\section{Abstract Dyson Expansion for Form-Bounded Perturbations}\label{sec:dyson}

Let $A\geq 0$ be a self-adjoint operator. Let $B$ be a symmetric quadratic form that is form-bounded with respect to $A$ with relative bound less than $1$, i.e., there exist $a \geq 0$ and $0 < \lambda < 1$ such that 
\begin{equation}\label{assB}
|  B(\psi, \psi)  |  \leq \lambda \left(  \| \sqrt{A} \psi\|  + a \| \psi\|^2 \right)
\end{equation}
for any $\psi$ in the quadratic form domain of $A$ (i.e., the domain of $\sqrt{A}$). Then the semi-bounded quadratic form $\| \sqrt{A}\psi\|^2 + B(\psi,\psi)$ (defined on the form domain of $A$) defines a unique self-adjoint operator \cite[Thm.~X.17]{RS}, which we shall denote by $A+B$. 
Let
\begin{equation}\label{def:C}
C = (A+a)^{-1/2} (A+ B) (A+a)^{-1/2} - A (A+a)^{-1}
\end{equation}
which one readily checks to be a bounded operator satisfying  $\|C\| \leq \lambda <1$. 
It will be convenient to think of $B$  as formally given by $B = \sqrt{A+a} C \sqrt{A+a}$; the precise way to interpret this identity is $A+B + a = \sqrt{A+a}(C+1)\sqrt{A+a}$, which directly follows from the definitions. 
%Then $A+B$ is well-defined as a semi-bounded quadratic form, with form domain equal to the one of $A$  

Our goal in this section is to justify the Dyson expansion, formally given by
\begin{equation}\label{abstract:dyson}
\eu^{-t(A+B)} = \eu^{-t A} + \sum_{n\geq 1}(-1)^n \int_{\Delta_n^t} \eu^{-t_0 A} B \eu^{-t_1 A} \cdots B \eu^{-t_n A} \di\underline t
\end{equation}
for $t>0$, where the integral is over the simplex  $\Delta_n^t$ defined in \eqref{def:simplex}. % = \{ t_i \geq 0 \ \text{for $0\leq i\leq n$}, \ \sum_{i=0}^n t_i = t\}$. 
However, in the general setting considered here, where $B$ is merely form-bounded relative to $A$, it is not clear how to even make sense of the integrals on the right-hand side. %, and that the sum is convergent. 
Note that even under the stronger assumption that $B$ is operator-bounded relative to $A$, the validity of \eqref{abstract:dyson} is not known, in general; we refer to \cite{rhandi,brendle} for partial results in this direction. 

The following theorem states that under the assumption~\eqref{assB} above, $\eu^{-t(A+B)}$ indeed has a convergent expansion in powers of $B$, but the individual terms in the expansion cannot necessarily be written in the form \eqref{abstract:dyson}. This will be sufficient for our application in the proof of Theorem~\ref{thm:1}, however.

\begin{theorem}\label{thm:2}
Let $A \geq 0$ be a self-adjoint operator,  $B$ a relatively form bounded perturbation satisfying \eqref{assB} for $0\leq \lambda<1$ and $a\geq 0$, and let $A+B$ be defined as above. Then, for $t>0$, there exist bounded operators $D_n$ satisfying
\begin{equation}\label{bound:on:D}
\| D_n\|  \leq \frac {\eu^{3/2}}{\pi   } \sqrt{n+2}\, \lambda^n \eu^{ta}
\end{equation}
such that one has the norm-convergent expansion
\begin{equation}\label{ade}
\eu^{-t (A+B)} = \eu^{-tA} +  \sum_{n\geq 1} D_n \,.
\end{equation}
%uniformly in $t>0$. 
\end{theorem} 

The proof gives an explicit formula for $D_n$ in terms of a contour integral and the operator $C$ in \eqref{def:C}, see Eq.~\eqref{def:D_n} below. For bounded $B$, one can show that it agrees with the $n$'th term on the right-hand side of \eqref{abstract:dyson}.

\begin{proof}
By absorbing $a$ into $A$, we can set $a=0$ without loss of generality, and assume that $A$ has a bounded inverse. 
We start with the contour integral representation
\begin{equation}
\eu^{-t (A+B)} = \frac {1}{2\pi \mi} \int_{\Gamma_{\gamma,\kappa}} \eu^{- t z} \frac 1 { z - A  - B} \di z
\end{equation}
 for  the contour 
\begin{equation}
\Gamma_{\gamma,\kappa} = \{ z \in \Cc \, : \, \Re z = -\kappa + \gamma | \Im z |  \}
\end{equation}
for some $\gamma > 0$ and  $\kappa> 0$, integrated counter-clockwise (i.e., from top to bottom), see Figure~\ref{fig:path}.
\begin{figure}[t!]
\centering
\begin{tikzpicture}[scale=1.,>=stealth]

  % Assi complessi
  \draw[->] (-4,0) -- (4,0) node[right] {$\mathfrak{Re}z$};
  \draw[->] (0,-3) -- (0,3) node[above] {$\mathfrak{Im}z$};

  % Punti notevoli
  %\filldraw[black] (-2,0) circle (1pt) node[below left] ;
  %\filldraw[black] (0,-0.6) circle (1pt) node[below left] ;
  %\filldraw[black] (0,0.6) circle (1pt) node[above left];

  % Curva gamma1 (parte superiore, da sinistra verso destra)
  \draw[thick, blue] (-2,0) -- (3,1.5) node[right] {$\Gamma_{\gamma,\kappa}$};
  \draw[thick, blue, -<] (-2,0) -- (0.5,0.75);

  % Curva -gamma2 (parte inferiore, da destra verso sinistra)
  \draw[thick, blue] (-2,0) -- (3,-1.5)node[right] {};
  \draw[thick, blue, -<] (3,-1.5) -- (0.5,-0.75);

  % Curva -gamma2 (parte inferiore, da destra verso sinistra)
  \draw[thick, red] (-2,0) -- (-2,3) node[left]{$\mathfrak{Re}z= -\kappa$};
  \draw[thick, red, -<] (-2,0) -- (-2,1.5);
  \draw[thick, red] (-2,-3) -- (-2,0) ;
  \draw[thick, red, -<] (-2,-3) -- (-2,-1.5);
\end{tikzpicture}
\caption{The contour $\Gamma_{\gamma,\kappa}$.}
\label{fig:path}
\end{figure}
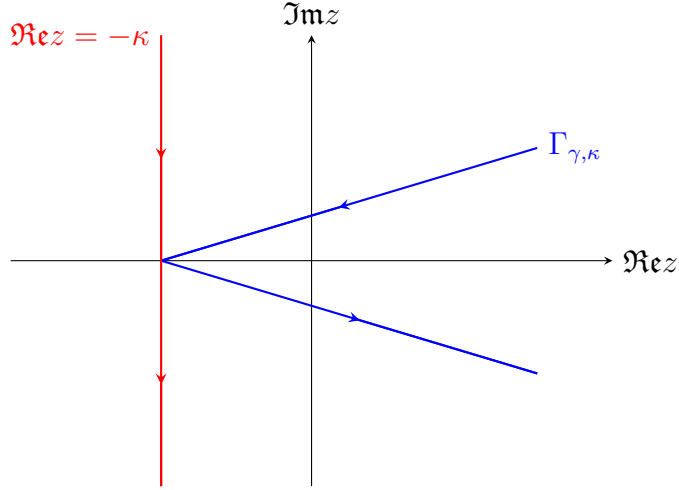
In the following we shall use that the distance between $\Gamma_{\gamma,\kappa}$ and a point on the positive real axis is 
\begin{equation}\label{dist}
\inf_{z\in \Gamma_{\gamma,\kappa} } | z - \lambda | =  \frac{\kappa+\lambda}{\left(1+\gamma^2\right)^{1/2}}\quad  \text{for $\lambda>0$.}
\end{equation}
With $C$ defined as in \eqref{def:C} (with $a=0$), we have the identity 
\begin{equation}
z - A - B =  \sqrt{A} \left( 1-  C  A(z-A)^{-1}  \right)  A^{-1/2}   (z-A) \,.
\end{equation}
Moreover, for $z\in\Gamma_{\gamma,\kappa}$,  \eqref{dist} implies that  
\begin{equation}\label{norm:A}
\left\| A (z-A)^{-1}  \right\| \leq \sup_{\lambda>0} \frac \lambda {|z-\lambda|} \leq  \left( 1+ \gamma^{2} \right)^{1/2}
\end{equation}
and hence $\| CA(z-A)^{-1}  \| < 1$ for $\gamma$ small enough. Under this assumption on $\gamma$, we thus have the convergent Neumann series
\begin{align}\nonumber
\left( z - A - B\right)^{-1}  & = A^{1/2} (z-A)^{-1} \left( 1- CA (z-A)^{-1}  \right)^{-1} A^{-1/2}      \\
& = \frac {A^{1/2}}{z-A} \sum_{n\geq 0} \left( C \frac A{z-A}  \right)^n  A^{-1/2} \,.
\end{align}
By dominated convergence, we can interchange the sum and the integral. 
In particular, we conclude that \eqref{ade} holds 
%\begin{equation}\label{ade}
%\eu^{-t (A+B)} = \eu^{-tA} +  \sum_{n\geq 1} D_n
%\end{equation}
with
\begin{equation}\label{def:D_n}
D_n =  \frac {1}{2\pi \mi} \int_{\Gamma_{\gamma,\kappa}} \eu^{- t z}   \frac {A^{1/2}} {z-A}  \left(C \frac A{z-A}  \right)^{n-1} C \frac {A^{1/2}} {z-A}  \di z
\end{equation}
for $n\geq 1$. Note that $D_n$ is actually independent of $\gamma$ and $\kappa$. 
Since 
\begin{equation}
\left\| A^{1/2}(z-A)^{-1}  \right\| \leq  \sup_{\lambda>0} \frac {\sqrt \lambda} {|z-\lambda|} \leq  \left( 1+ \gamma^{2} \right)^{1/2} \kappa^{-1/2}
 \end{equation}
for $z \in \Gamma_{\gamma,\kappa}$ (again using \eqref{dist}), the norm of $D_n$ is bounded as 
\begin{equation}
\| D_n\| \leq \frac {1}{2\pi \kappa }  \left( 1+ \gamma^{2} \right)^{(n+1)/2} \| C \|^n   \int_{\Gamma_{\gamma,\kappa}} \eu^{- t \Re z}    |\di z| =  \frac {\eu^{t\kappa}}{\pi \kappa\gamma t }  \left( 1+ \gamma^{2} \right)^{n/2 + 1} \| C \|^n    \,.
\end{equation}
The optimal choice of $\kappa$ and $\gamma$ is thus $\kappa = 1/t$, $\gamma^2 = (n+1)^{-1}$, leading to 
\begin{equation}
\| D_n\| \leq  \frac {\eu}{\pi   } \frac { (n+2)^{(n+2)/2}}{ (n+1)^{(n+1)/2}}  \| C \|^n \leq \frac {\eu^{3/2}}{\pi   } \sqrt{n+2}  \| C \|^n
\end{equation}
uniformly in $t>0$, and thus to the claim in \eqref{bound:on:D} after shifting back by $a$. 
\end{proof}

\begin{remark}
The proof above can be modified to also show that 
\begin{equation}
\left\| \sqrt{A} D_n \sqrt{A} \right\| \leq  \frac {\eu}{\pi  t  } \frac { (n+2)^{(n+2)/2}}{ (n+1)^{(n+1)/2}}  \| C \|^n
\end{equation}
which diverges as $t\to 0$, however. 
\end{remark}

\section{Application to the Polaron Model}

We shall now apply the result of the previous section to the polaron problem. We shall write $\mathbb{H}(P) =  A + B - a$ for suitable (i.e., large enough) $a\geq 0$, 
with
\begin{equation}
A =  \left| P - \mathbb{P}_f\right|^2 + \di\Gamma(\omega) + a
\end{equation}
and $B= \Phi(v) = a(v) + a^\dagger(v)$. As already mentioned in the introduction, the commutator method of Lieb--Yamazaki (recalled in Appendix~\ref{appendix:B}) shows that under Assumption~\ref{ass1},  $B$ is infinitesimally form-bounded with respect to $| P - \mathbb{P}_f| ^2 + \di\Gamma(\omega)$, and in particular $\| C \| < 1$ for   $a$ large enough, where $C=  A^{-1/2} (A+B) A^{-1/2}-1$.

With $\Omega$ the vacuum state, we have $A \Omega = (|P|^2+a)\Omega$. From \eqref{ade} we thus obtain the convergent expansion
\begin{align}\nonumber
\langle\Omega | \eu^{-t \mathbb{H}(P) } \Omega \rangle & = \eu^{t a} \langle\Omega | \eu^{-t (A+B) } \Omega\rangle \\ &= \eu^{-t |P|^2} + \sum_{n\geq 1}  \frac {\eu^{t a}  }{2\pi \mi} \int_{\Gamma_{\gamma,\kappa}}  \frac{\eu^{- t z} }{z-|P|^2-a}  \langle \Omega |    \left( C \frac A{z-A}  \right)^{2n}  \Omega\rangle \,\di z\label{24} \,.
\end{align}
Here we used that only even $n$ in \eqref{ade} contribute, the odd terms have zero vacuum expectation value (since they necessarily involve an unequal number of creation and annihilation operators). 

Let $C^+ = A^{-1/2} a^\dagger(v) A^{-1/2}$ and $C^- = A^{-1/2} a(v) A^{-1/2}$, so that $C = C^+ + C^-$. We have
\begin{equation}\label{sumit}
\langle \Omega |       \left( C \frac A{z-A}  \right)^{2n}  \Omega\rangle = \sum_{\nu \in \mathcal{D}_n} \langle \Omega | \prod_{j=1}^{2n} \left( C^{\nu_j} \frac A{z-A} \right)  \Omega\rangle
\end{equation}
where we identify $C^{+1} \equiv C^+$ and $C^{-1} \equiv C^-$, and  $\mathcal{D}_n$ denotes the set of Dyck paths of length $2n$, i.e., 
\begin{equation}\label{def:Dn}
\mathcal{D}_n = \left\{ \nu \in \{ -1,+1\}^{2n} \, : \, \sum_{j=1}^k \nu_j \leq 0 \  \text{for $1\leq k \leq 2n-1$}, \ \sum_{j=1}^{2n} \nu_j = 0\right\}\,.
\end{equation}
Only these terms contribute to the sum in \eqref{sumit}, since particles can be annihilated only after they have been created, i.e., for a particle annihilated at step $j$, there had to be one created at some step $j'<j$. See Figure~\ref{fig:dyck} for an illustration of a Dyck pack. 

\begin{figure}[ht!]
    \centering
    \begin{tikzpicture}[scale=0.8, >=stealth]

        \draw[->] (-1,0) -- (10,0) node[right] {};
        \draw[->] (0,-1) -- (0,4) node[left] {};

        \draw[thick, blue] 
            (1,0) -- (2,1) 
            -- (3,0) 
            -- (4,1) 
            -- (5,2) 
            -- (6,1) 
            -- (7,2)
            -- (8,1)
            -- (9,0);
        \foreach \x/\y in {1/0, 2/1, 3/0, 4/1, 5/2, 6/1, 7/2, 8/1, 9/0} {
            \filldraw (\x,\y) circle (2pt);
        }
    \end{tikzpicture}
    \caption{Example of a Dyck path of length $2n = 8$, corresponding to $\nu = \{-1,+1,-1,-1,+1,-1,+1,+1\}$. Read from the right,  we represent a creation (+1) with a  line going up in the Dyck path, and an annihilation (-1) with a  line going down. 
    %The components of $\nu$ are collected from the left.
    } \label{fig:dyck}
\end{figure}

For $\nu\in \mathcal{D}_n$ for some $n\geq 1$, % = \bigcup_{n\geq 1} \mathcal{D}_n$, 
let
\begin{equation}\label{def:Lambda}
\Lambda_z(\nu) = \langle \Omega |  \prod_{j=1}^{2n} \left( C^{\nu_j} \frac A{z-A} \right)  \Omega\rangle\,.
\end{equation}
From \eqref{24}--\eqref{sumit}, we thus have 
\begin{equation}
\langle\Omega | \eu^{-t \mathbb{H}(P) } \Omega \rangle   = \eu^{-t |P|^2} +\sum_{\nu \in \mathcal{D}}  \frac {\eu^{t a}  }{2\pi \mi} \int_{\Gamma_{\gamma,\kappa}}  \frac{\eu^{- t z} }{z-|P|^2-a}  \Lambda_z (\nu)\, \di z   \,.\label{2rep}
\end{equation}
To compute the terms on the right-hand side, we need some further notation. 
Recall that $\mathcal{W}_{2n}\subset S_{2n}$ denotes the set of those permutations satisfying the Wick rule,  
i.e., $\pi(2j-1) < \pi(2j+1)$ for $j\in\{1,\dots,n-1\}$ and $\pi(2j-1) < \pi(2j)$ for $j\in\{1,\dots,n\}$.  
Recall also the definition of $\mathscr{E}^{(\pi,j)}_{P}$ in \eqref{def:b}, and let 
\begin{equation}\label{def:S}
\mathscr{S}^{\pi}_{P}(z, \underline k ) 
=\prod_{j=1}^{2n} \left( z - a -  \mathscr{E}^{(\pi,j)}_{P}(\underline k)\right)^{-1} 
\end{equation}
for $\pi \in \mathcal{W}_{2n}$ and $\underline k = (k_1,\dots,k_n)$.

\begin{lemma}\label{key:lem}
Under Assumption~\ref{ass1},
\begin{equation}\label{lem:inte}
\sup_{z\in \Gamma_{\gamma,\kappa}} \int_{\R^{dn} } \left| \mathscr{S}^{\pi}_{P}(z, \underline k )   \right| \prod_{j=1}^n |v(k_j)|^2  \di k_j < \infty
\end{equation}
for any $\pi \in \mathcal{W}_{2n}$ and $\gamma,\kappa>0$. 
\end{lemma}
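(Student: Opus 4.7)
My plan is to bound the integrand of \eqref{lem:inte} pointwise by a non-negative, $z$-independent expression, then identify the resulting integral as a single term in the Wick expansion of a positive vacuum expectation, and finally control this expectation by the operator norm of a Lieb--Yamazaki-type operator built from $|v|$.

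First, an elementary minimization over $\Gamma_{\gamma,\kappa}$ shows $|z-a-E|^{-1}\leq\sqrt{1+\gamma^2}\,(\kappa+a+E)^{-1}$ for any $E\geq 0$ and $z\in\Gamma_{\gamma,\kappa}$ (the minimum of $|z-a-E|^2$ along the contour is $(\kappa+a+E)^2/(1+\gamma^2)$). Applied to each of the $2n$ factors of $\mathscr{S}^\pi_P$, this gives
\begin{equation*}
|\mathscr{S}^\pi_P(z,\underline k)|\leq (1+\gamma^2)^n\prod_{j=1}^{2n}\bigl(\kappa+a+\mathscr{E}^{(\pi,j)}_P(\underline k)\bigr)^{-1},
\end{equation*}
uniformly in $z\in\Gamma_{\gamma,\kappa}$. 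It thus suffices to show that the integral of the right-hand side against $\prod_i|v(k_i)|^2\,d\underline k$ is finite.

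Next, I introduce $\tilde A=A+\kappa$, so that $\tilde A\Omega=(\kappa+a+|P|^2)\Omega$, together with the self-adjoint operator $\tilde C=\tilde A^{-1/2}\Phi(|v|)\tilde A^{-1/2}=\tilde C^+ +\tilde C^-$ where $\tilde C^+=\tilde A^{-1/2}a^\dagger(|v|)\tilde A^{-1/2}$ and $\tilde C^-=\tilde A^{-1/2}a(|v|)\tilde A^{-1/2}$. To $\pi$ I associate its Dyck sequence $\nu=\nu(\pi)\in\{-1,+1\}^{2n}$, with $\nu_j=-1$ iff $j\in\{\pi(1),\pi(3),\dots,\pi(2n-1)\}$. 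Expanding $\langle\Omega|\prod_j\tilde C^{\nu_j}|\Omega\rangle$ by Wick's theorem and evaluating each intervening $\tilde A^{-1}$ on the intermediate Fock-space state (whose occupied momenta are precisely those indexed by $\mathcal{M}^{\pi'}_j$) yields
\begin{equation*}
\langle\Omega|\prod_{j=1}^{2n}\tilde C^{\nu_j}|\Omega\rangle=\sum_{\pi':\,\nu(\pi')=\nu}\int_{\R^{dn}}\prod_{j=1}^{2n}\bigl(\kappa+a+\mathscr{E}^{(\pi',j)}_P(\underline k)\bigr)^{-1}\prod_{i=1}^n|v(k_i)|^2\,d\underline k,
\end{equation*}
and the integral of interest is precisely the summand $\pi'=\pi$ on the right.

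Each summand above has a non-negative integrand, so each is bounded by the total. Summing further over all Dyck sequences $\nu$ (non-Dyck sequences contribute $0$), one obtains
\begin{equation*}
\int_{\R^{dn}}\prod_{j=1}^{2n}\bigl(\kappa+a+\mathscr{E}^{(\pi,j)}_P\bigr)^{-1}\prod_i|v(k_i)|^2\,d\underline k\leq\langle\Omega|\tilde C^{2n}|\Omega\rangle\leq\|\tilde C\|^{2n}.
\end{equation*}
The Lieb--Yamazaki estimate of Appendix~\ref{appendix:B} applies to $|v|$ verbatim, using the pointwise bound $|v|\leq|v_1|+|v_2|$ together with the monotonicity of $\di\Gamma(f^2)$ in $f\geq 0$; this yields $\|\tilde C\|<\infty$ and completes the bound. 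The main difficulty I anticipate is the combinatorial identification in the second step: tracking the intermediate Fock-space states as the $a^{\sharp}(|v|)$'s act in turn on $\Omega$, and verifying that the surviving contractions are in exact bijection with the pairings $\pi'\in\mathcal{W}_{2n}$ satisfying $\nu(\pi')=\nu$, with the intermediate occupied momenta realizing precisely $\mathcal{M}^{\pi'}_j$.
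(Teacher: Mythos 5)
Your route is genuinely different from the paper's, and the overall idea is sound, but there is a circularity you need to repair. The paper bounds $\left|\mathscr{S}^\pi_P(z,\underline k)\right|$ pointwise and elementarily: pairing consecutive resolvent factors (using that $\mathcal{M}^\pi_{j-1}$ and $\mathcal{M}^\pi_j$ differ by exactly one index $i_j$, together with $|Q|^2+|Q-k|^2\geq\tfrac12|k|^2$) yields $\left|\mathscr{S}^\pi_P\right|\leq(1+\gamma^2)^n(\kappa+a)^{-n}\prod_{i=1}^n\bigl(\kappa+a+\omega(k_i)+\tfrac12|k_i|^2\bigr)^{-1}$, from which integrability against $\prod|v(k_i)|^2$ is immediate from Assumption~\ref{ass1}. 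Your approach instead bounds the integral by $\langle\Omega|\tilde C^{2n}\Omega\rangle\leq\|\tilde C\|^{2n}$ via a Wick/pull-through identity, which is conceptually attractive and more in the spirit of Section~\ref{sec:dyson}.

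The gap: the identity you assert for $\langle\Omega|\prod_j\tilde C^{\nu_j}|\Omega\rangle$ is precisely Lemma~\ref{key:lem2} evaluated at $z=-\kappa$ (with $|v|$ in place of $v$), and the paper's proof of Lemma~\ref{key:lem2} \emph{uses} Lemma~\ref{key:lem}, both for the analyticity reduction and for the continuity-in-$v$ step needed to pass from nice $v$ to general $v$. For general $v$ it is not a priori clear that the intermediate vectors $a(|v|)\tilde A^{-1}\cdots\Omega$ are even in the right domains, so the Wick expansion cannot simply be written down. This is the real difficulty, not the combinatorics you flag (which is a standard pull-through computation). The fix is monotone approximation: take bounded, compactly supported $w_m\nearrow|v|$; for each $w_m$ the Wick/pull-through identity is unproblematic and gives $\int\prod_j(\kappa+a+\mathscr{E}^{(\pi,j)}_P)^{-1}\prod_i|w_m(k_i)|^2\,\di\underline k\leq\|\tilde C_m\|^{2n}$; the Lieb--Yamazaki bound gives $\|\tilde C_m\|\leq\lambda<1$ uniformly in $m$ (choosing $\kappa$ or $a$ large enough); monotone convergence then yields the claim for $|v|$. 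Two smaller points: first, $|v|\leq|v_1|+|v_2|$ is not by itself a decomposition of $|v|$ -- you should set, e.g., $w^{(1)}=\min(|v|,|v_1|)$ and $w^{(2)}=|v|-w^{(1)}\leq|v_2|$ before invoking Lieb--Yamazaki; second, note that the resulting proof is considerably longer than the paper's direct estimate, which for this particular lemma is the more economical route.
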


\begin{proof}
Recall the definition of $\mathcal{M}^\pi_j $ in \eqref{def:M}. 
A little thought reveals that, for any $1\leq j \leq 2n$, $\mathcal{M}^\pi_{j-1}$ and $\mathcal{M}^\pi_j $ differ  by exactly one element, namely the $i \in \{ 1 ,\dots , n\}$ such that $j\in \{ \pi(2i-1), \pi(2i)\}$. We shall denote it by  $i_j$. With the aid of \eqref{dist}, 
%
%\begin{equation}
%|  z - a - |P - k|^2   |  \geq  \frac { \kappa + a + | P-k|^2 }{ (1+\gamma^2)^{1/2}}
%\end{equation}
 one readily finds that
\begin{equation}\label{rf}
 \left|  \left( z - a -  \mathscr{E}^{(\pi,j-1)}_{P}(\underline k)\right) \left( z - a -  \mathscr{E}^{(\pi,j)}_{P}(\underline k)\right) \right|  \geq \frac { (\kappa + a)( \kappa + a + \omega(k_{i_j}) + \frac 12 |k_{i_j}|^2 ) }{ 1+ \gamma^2}
\end{equation}
for $z \in \Gamma_{\gamma,\kappa}$ and  $1\leq j \leq 2n$. Since $\mathcal{M}_0^\pi = \mathcal{M}_{2n}^\pi = \emptyset$ and thus $\mathscr{E}^{\pi,0}_P = \mathscr{E}^{\pi,2n}_P = |P|^2$,
we can write
\begin{equation}
\left| \mathscr{S}^{\pi}_{P}(z, \underline k ) \right|  
=\prod_{j=1}^{2n} \left|  z - a -  \mathscr{E}^{(\pi,j-1)}_{P}(\underline k)\right|^{-1/2}  \left|  z - a -  \mathscr{E}^{(\pi,j)}_{P}(\underline k)\right|^{-1/2} 
\end{equation}
and hence \eqref{rf} yields the bound 
\begin{equation}\label{byc}
\left|\mathscr{S}^{\pi}_{P}(z, \underline k ) \right| \leq \left( \frac { 1+ \gamma^2}{ \kappa + a} \right)^n   \prod_{j=1}^{2n}  \left( \kappa + a + \omega(k_{i_j}) + \frac 12 | k_{i_j}|^2 \right)^{-1/2}  \,.
\end{equation}
By construction, each $k_i$ appears exactly twice in the list $\{ k_{i_1}, \dots, k_{i_{2n}}\}$, hence \eqref{byc}  reads
\begin{equation}
\left| \mathscr{S}^{\pi}_{P}(z, \underline k ) \right| \leq \left( \frac { 1+ \gamma^2}{ \kappa + a} \right)^n   \prod_{i=1}^{n} \frac { 1 }{  \kappa + a + \omega(k_i) + \frac 12 |k_{i}|^2  } \,.
\end{equation}
The claim \eqref{lem:inte} then follows from our assumptions on $v$ and $\omega$ in Assumption~\ref{ass1}. 
\end{proof}

Each Wick pairing $\pi \in \mathcal{W}_{2n}$ corresponds to a Dyck path defined by $\nu_{\pi(j)} = (-1)^j$. We can thus decompose 
$\mathcal{W}_{2n} = \bigcup_{\nu \in \mathcal{D}_n} \mathcal{W}_{2n}^\nu$, where $\mathcal W_{2n}^\nu \subset \mathcal{W}_{2n}$ denotes the set of  those Wick pairings giving rise to the same Dyck path $\nu$.
 With this notation, we have the following identity for $\Lambda_z(\nu)$ defined in \eqref{def:Lambda}.

\begin{lemma}\label{key:lem2}
Under Assumption~\ref{ass1}, we have 
\begin{equation}\label{idLa}
\Lambda_z(\nu) = 
\sum_{\pi \in \mathcal{W}_{2n}^\nu}  \int_{\R^{dn} }   \mathscr{S}^{\pi}_{P}(z, \underline k ) \prod_{j=1}^n |v(k_j)|^2  \di k_j
\end{equation}
for any  $\nu \in \mathcal{D}_n$ and $z \in \Gamma_{\gamma,\kappa}$.
\end{lemma}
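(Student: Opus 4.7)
My plan is to compute $\Lambda_z(\nu)$ by first rearranging the operator product so that creation and annihilation operators alternate with resolvents of $A$, and then applying Wick's theorem in momentum space to collapse the vacuum expectation to a sum over matching pairings.

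The algebraic step exploits that $A^{-1/2}$ and $A/(z-A)$ are commuting functions of $A$: writing $a^{\#_j}(v) := a^\dagger(v)$ for $\nu_j=+1$ and $a^{\#_j}(v) := a(v)$ for $\nu_j=-1$, one has $C^{\nu_j}\,A/(z-A) = A^{-1/2}\,a^{\#_j}(v)\,A^{1/2}/(z-A)$. Telescoping, with the identity $A^{1/2}(z-A)^{-1}A^{-1/2}=(z-A)^{-1}$ between consecutive factors, gives
\begin{equation*}
\prod_{j=1}^{2n}C^{\nu_j}\frac{A}{z-A} \;=\; A^{-1/2}\,a^{\#_1}(v)\,\frac{1}{z-A}\,a^{\#_2}(v)\,\frac{1}{z-A}\cdots\frac{1}{z-A}\,a^{\#_{2n}}(v)\,\frac{A^{1/2}}{z-A}.
\end{equation*}
The outer $A^{-1/2}$ and $A^{1/2}(z-A)^{-1}$ act on $\langle\Omega|$ and $|\Omega\rangle$ via $A\Omega=(|P|^2+a)\Omega$, producing a scalar prefactor $(z-|P|^2-a)^{-1}$ in front of $\langle\Omega|\,a^{\#_1}(v)(z-A)^{-1}\cdots a^{\#_{2n}}(v)\,|\Omega\rangle$.

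I would then apply Wick's theorem. Expanding $a^+(v)=\int v(k)\,a^\dagger(k)\,dk$ and $a^-(v)=\int \overline{v(k)}\,a(k)\,dk$ introduces $2n$ momentum variables, and the bosonic Wick rule expresses the vacuum expectation as a sum over pairings of positions. A pair at positions $(\alpha,\beta)$ with $\alpha<\beta$ contributes non-trivially iff $\nu_\alpha=-1$ (annihilation on the left) and $\nu_\beta=+1$ (creation on the right); these are exactly the conditions defining $\mathcal{W}_{2n}^\nu\subset\mathcal{W}_{2n}$. The $i$-th pair contraction identifies $k^{(\pi(2i-1))}$ with $k^{(\pi(2i))}$ via a delta function, collapsing the $2n$-fold integral to an $n$-fold integral over $k_1,\dots,k_n$ with joint weight $\prod_{i=1}^n |v(k_i)|^2$ (each pair contributing one $v$ and one $\overline v$). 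For a fixed $\pi\in\mathcal{W}_{2n}^\nu$, the intermediate Fock state between positions $j$ and $j+1$ (in the right-to-left order of operator application) carries precisely the momenta $\{k_i : i\in\mathcal{M}_j^\pi\}$, on which sector $A$ acts as multiplication by $a+\mathscr{E}_P^{(\pi,j)}(\underline k)$; hence each internal resolvent contributes the factor $(z-a-\mathscr{E}_P^{(\pi,j)}(\underline k))^{-1}$. Combined with the prefactor---which is precisely the missing $j=2n$ factor, since $\mathcal{M}_{2n}^\pi=\emptyset$ forces $\mathscr{E}_P^{(\pi,2n)}=|P|^2$---the $2n$ factors assemble into $\mathscr{S}_P^\pi(z,\underline k)$, yielding \eqref{idLa}.

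The main technical obstacle is justifying the momentum-space expansion of $a^\#(v)$ when $v\notin L^2(\R^d)$, as permitted by Assumption~\ref{ass1} (e.g.\ in the Fr\"ohlich case). I would handle this by an ultraviolet cutoff: replace $v$ by $v_R(k)=v(k)\,\mathbf{1}_{|k|\le R}$, for which $a^\#(v_R)$ is bounded on each $N$-particle sector and the Wick computation above is entirely classical. The identity \eqref{idLa} for $v_R$ then passes to the limit $R\to\infty$: on the right-hand side by dominated convergence, with Lemma~\ref{key:lem} supplying a uniform $L^1$ majorant under Assumption~\ref{ass1}, and on the left-hand side because the Lieb--Yamazaki estimate recalled in Appendix~\ref{appendix:B} yields $\|C_R^\pm-C^\pm\|\to 0$, and hence $\Lambda_{R,z}(\nu)\to\Lambda_z(\nu)$.
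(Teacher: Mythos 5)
Your proof is correct, but it takes a genuinely different technical route from the paper. Both arguments start from the same algebraic reduction (collapsing the $A^{\pm 1/2}$ factors via functional calculus of $A$ so that $\Lambda_z(\nu)=(z-|P|^2-a)^{-1}\langle\Omega\,|\,a^{\#_1}(v)(z-A)^{-1}\cdots a^{\#_{2n}}(v)\,\Omega\rangle$), and both regularize $v$ and pass to the limit using Lemma~\ref{key:lem} and the Lieb--Yamazaki bound. The difference is in how the matrix element is evaluated. You expand $a^{\#}(v)$ in the operator-valued distributions $a^{\#}(k)$ and track the momentum content of the intermediate Fock-space vectors, so that each resolvent $(z-A)^{-1}$ acts as a scalar $(z-a-\mathscr{E}^{(\pi,j)}_P(\underline{k}))^{-1}$ on the relevant $|\mathcal{M}^\pi_j|$-particle sector; the ``Wick theorem'' here is really a direct sector-by-sector computation that produces the pairing sum restricted to $\mathcal{W}_{2n}^\nu$. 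The paper instead avoids operator-valued distributions altogether: it writes $(z-A)^{-1}$ (for $\Re z<0$) as a Laplace--Gaussian superposition of operators $\eu^{-s\,\di\Gamma(\omega)}\eu^{-\mi p\cdot\mathbb{P}_f}$, pulls these through the $a^{\#}$'s to obtain a vacuum expectation of a pure product $\prod_j a^{\nu_j}(v_j)$ with explicit modified $L^2$ test functions $v_j$, and then applies the standard Wick rule for products of smeared creation/annihilation operators. The paper then extends from $\Re z<0$ to all of $\Gamma_{\gamma,\kappa}$ by analyticity. Your route is closer to the physicist's diagrammatic computation and does not need the restriction to $\Re z<0$ or the Gaussian representation, but it requires being somewhat informal about $a^\#(k)$ as distributions (which the regularization step lets you do safely, since for $v_R\in L^2$ with compact support the sector-by-sector computation is entirely elementary). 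The paper's route is more conservative in that it never leaves the smeared-operator setting. Both are valid; yours is shorter in spirit, the paper's is more careful about where the resolvent acts.
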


Formally, the identity \eqref{idLa} follows from an application of the Wick rule and the pull-through formula $a_k \mathcal{R}(\di \Gamma(\omega), \mathbb{P}_f) =\mathcal{R}(\di \Gamma(\omega)+\omega(k), \mathbb{P}_f+k) a_k$ for functions $\mathcal{R}$ and the operator-valued distribution $a_k$ used to define $a(v)$ by $a(v) = \int a_k \overline{v(k)} \di k$. The proof below can be viewed as a rigorous justification of this procedure.

\begin{proof}
 By analyticity and the uniform integrability established in Lemma~\ref{key:lem}, it suffices to consider the case $\Re z < 0$. Moreover, since both sides of \eqref{idLa} are continuous in $v$ (with respect to the norm defined in Assumption~\ref{ass1}), it suffices to consider the case of suitably nice $v$. We shall in fact first assume that $k\mapsto v(k) \eu^{s\omega (k)}$ is in $L^2(\R^d)$ for any $s>0$. In particular, we work under the  assumption $v\in L^2(\R^d)$, in which case $a(v)$ is effectively a bounded operator in our computations (where the number of particles is bounded by $n$). Moreover, if $\omega$ is unbounded, this assumption additionally requires $v$ to suitably decay. (We could choose it to have compact support, for instance.) % so that $k\mapsto v(k) \eu^{s\omega (k)}$ is in $L^2(\R^d)$ for any $s>0$. 
 For $\Re z < 0$ we can write
\begin{equation}
\frac 1{z-A} = - \int_0^\infty \frac{\eu^{s(z-a)}}{(4\pi s)^{d/2}} \eu^{-s \di\Gamma(\omega)} \int_{\R^d} \eu^{\mi  p\cdot  (P - \mathbb{P}_f)} \eu^{- |p|^2 /(4s)} \, \di p\,  \di s
\end{equation}
and hence 
\begin{align}\nonumber
&\Lambda_z(\nu)  = \langle \Omega |  \prod_{j=1}^{2n} \left( a^{\nu_j}(v)\frac 1{z-A}  \right)  \Omega\rangle \\ & 
=\int_{\R_+^{2n}\times \R^{2nd}}  
\langle \Omega |  \prod_{j=1}^{2n} \left(a^{\nu_j}(v) \eu^{-s_j \di\Gamma(\omega)} \eu^{ - \mi p_j \cdot \mathbb{P}_f} \right)  \Omega\rangle  \prod_{j=1}^{2n} \frac{\eu^{s_j(z-a)}}{(4\pi s_j)^{d/2}}  \eu^{\mi  p_j \cdot  P} \eu^{-|p_j|^2/(4s_j)}  \di s_j\, \di p_j
 \label{40}
\end{align}
where we identify $a^-(v)= a(v)$ and $a^+(v) = a^\dagger(v)$. 
Moreover, since
\begin{equation}
\eu^{-s\di\Gamma(\omega)} \eu^{-\mi p \cdot \mathbb{P}_f} a^\nu(v) =  a^\nu(\eu^{-s\nu\omega}\eu^{-\mi p\,\cdot\,}v)\eu^{-s\di\Gamma(\omega)} \eu^{-\mi p \cdot \mathbb{P}_f}
\end{equation}
for $\nu \in \{+,-\}$, we have 
\begin{equation}
\langle \Omega |  \prod_{j=1}^{2n} \left(a^{\nu_j}(v) \eu^{-s_j \di\Gamma(\omega)} \eu^{ - \mi p_j \cdot \mathbb{P}_f}  \right)  \Omega\rangle 
= \langle \Omega |  \prod_{j=1}^{2n}  a^{\nu_j}(v_j)   \Omega\rangle 
\end{equation}
with $v_1=v$ and 
\begin{equation}
v_j(k) = v(k) \eu^{- \omega(k)   \nu_j \sum_{\ell = 1}^{j-1} s_\ell  } \eu^{-\mi  k \cdot \sum_{\ell = 1}^{j-1} p_\ell }
\end{equation}
for $j\geq 2$. 
An application of Wick's rule further yields
\begin{align}\nonumber
\langle \Omega |  \prod_{j=1}^{2n}  a^{\nu_j}(v_j)   \Omega\rangle   & = \sum_{\pi \in \mathcal{W}_{2n}} \prod_{i=1}^n \langle \Omega |  a^{\nu_{\pi(2i-1)}}(v_{\pi(2i-1)})  a^{\nu_{\pi(2i)}}(v_{\pi(2i)})  \Omega\rangle  \\ \nonumber
&  = \sum_{\pi \in \mathcal{W}_{2n}} \prod_{i=1}^n \langle v_{\pi(2i-1)} | v_{\pi(2i)} \rangle \delta_{\nu_{\pi(2i-1)},-1} \delta_{\nu_{\pi(2i)},+1} \\ \nonumber
 & =  \sum_{\pi \in \mathcal{W}_{2n}^\nu } \prod_{i=1}^n \langle v_{\pi(2i-1)} | v_{\pi(2i)} \rangle 
\\ & =  \sum_{\pi \in \mathcal{W}_{2n}^\nu } \prod_{i=1}^n \int_{\R^d}  
|v(k)|^2  
\eu^{ - \omega(k)    \sum_{\ell = \pi(2i-1)}^{\pi(2i)-1} s_\ell  }
  \eu^{-\mi  k \cdot \sum_{\ell = \pi(2i-1)}^{\pi(2i)-1} p_\ell }
 \, \di k \,.
\end{align}
Inserting this in \eqref{40} above and interchanging the order of integrations then gives the desired formula \eqref{idLa} under the stated assumptions on $v$, i.e., under the assumption that   $k\mapsto v(k) \eu^{s\omega (k)}$ is in $L^2(\R^d)$ for any $s>0$. The set of such $v$'s can easily be seen to be dense in the set defined by Assumption~\ref{ass1} for the appropriate norm, hence the general case follows by continuity of both sides of \eqref{idLa},  as already mentioned in the beginning of the proof. 
\end{proof}

We now have all the prerequisites to give the proof of our main results.

\section{Proof of the Main Results}

\subsection{Proof of Theorem~\ref{thm:1}(a)} \label{proof1}
The starting point is \eqref{2rep}, where we insert the identity \eqref{idLa} from Lemma~\ref{key:lem2}. Because of the uniform integrability established in Lemma~\ref{key:lem}, 
we can interchange the integration over $k_1,\dots,k_n$ with the one over $z$, and do the latter first. 
One easily checks that for positive numbers $b_j$ one has 
\begin{equation}
 \frac {1}{2\pi \mi} \int_{\Gamma_{\gamma,\kappa}}  \eu^{- t z} \prod_{j=0}^{2n} \frac{1}{z-b_j} \di z =    \int_{\Delta_{2n}^t} \eu^{-\sum_{j=0}^{2n} t_j b_j } \di\underline t\,.
\end{equation}
(To see this, send $\gamma\to 0$, and write $(z-b_j)^{-1} =- \int_0^\infty \eu^{s(z-b_j)}\di s $ for $ z\in \Gamma_{0,\kappa} =\{ -\kappa + \mi t \, : \, t\in\R\}$, as sketched by the red contour in Figure~\ref{fig:path}.) In particular, for $ \mathscr{S}^{\pi}_{P}$ in \eqref{def:S} 
\begin{equation}\label{46}
  \frac {\eu^{ta}}{2\pi \mi} \int_{\Gamma_{\gamma,\kappa}}  \frac{\eu^{- t z}}{z-|P|^2-a}     \mathscr{S}^{\pi}_{P}(z, \underline k) \di z = \int_{\Delta_{2n}^t} \eu^{-\sum_{j=0}^{2n} t_j \mathscr{E}^{(\pi,j)}_{P}(\underline k) } \di\underline t 
\end{equation}
using again that $\mathscr{E}^{(\pi,0)}_P = |P|^2$. 
By positivity of the integrand, we can exchange the integration over $\underline t$ with the one over $\underline k$, thus arriving at the desired expansion \eqref{id1}. \hfill\qed

\subsection{Proof of Corollary~\ref{cor}(a)}
By Bernstein's Theorem, Assumption~\ref{ass2} implies that for any $r,s>0$ there exists a measure on $\di\mu_{r,s}$ on $\R_+$ such that 
\begin{equation}\label{murs}
\int_{\R^d} \eu^{-r |P-k|^2 - s \omega(k)}  |v(k)|^2 \di k  = \int_0^\infty \eu^{-\lambda |P|^2} \di\mu_{r,s}(\lambda)\,.
\end{equation}
We claim that this implies, for any $n\geq 1$, 
\begin{equation}\label{murs2}
\int_{\R^{dn}}    \eu^{- \mathscr{Q}(P,\underline k)}  \prod_{j=1}^n \eu^{- s_j \omega(k_j)}  |v(k_j)|^2 \di k_j  = \int_0^\infty \eu^{-\lambda |P|^2} \di\mu_{\mathscr{Q},\underline s}(\lambda)
\end{equation}
for a suitable measure $\di\mu_{\mathscr{Q},\underline s}$, for any set of positive parameters $\underline s = (s_1,\dots,s_n) \in \R_+^n$ and any positive quadratic form $\mathscr{Q}$ on $\R^{d(n+1)}$ that is invariant under simultaneous rotations of the $n+1$ variables,  i.e., is of the form $\mathscr{Q}(k_0,\underline k) = \sum_{i,j=0}^n \mathscr{Q}_{ij} k_i \cdot k_j$ for a positive symmetric matrix with coefficients $\mathscr{Q}_{ij}$. 

In the case $n=1$, this follows readily from \eqref{murs}, since $\mathscr{Q}(P,k_1)$ can be written as $\mathscr{Q}_{11}  | k_1 + \mathscr{Q}_{01} \mathscr{Q}_{11}^{-1} P|^2 + (\mathscr{Q}_{00} - \mathscr{Q}_{01}^2 \mathscr{Q}_{11}^{-1}) |P|^2$. 

For general $n$, we can proceed by induction. Singling out the last integration variable $k_n$, we can write 
\begin{equation}
\mathscr{Q}(k_0,\underline k) = \mathscr{Q}_{nn} \Big| k_n +  \mathscr{Q}_{nn}^{-1} \sum_{j=0}^{n-1} \mathscr{Q}_{jn} k_j  \Big|^2 + \mathscr{Q}'(k_0,k_1,\dots,k_{n-1})
\end{equation}
with $\mathscr{Q}'$  a rotation-invariant non-negative quadratic form on $\R^{d n}$, given by 
\begin{equation}
 \mathscr{Q}'_{ij} = \mathscr{Q}_{ij}  -\frac 1{\mathscr{Q}_{nn}} \mathscr{Q}_{in} \mathscr{Q}_{nj} \quad \text{for $0\leq i,j\leq n-1$.}
\end{equation}
Applying \eqref{murs} to the integration over $k_n$, we are left with an integral over similar problems for $n-1$ variables, with quadratic expressions $\lambda | \mathscr{Q}_{0n} P +  \sum_{j=1}^{n-1} \mathscr{Q}_{jn} k_j |^2 + \mathscr{Q}'(P,k_1,\dots,k_{n-1})$ for $\lambda \geq 0$. The claim \eqref{murs2} thus follows by induction.

By positivity of the integrand, we can first carry out the integration over $\underline k$ for fixed $\underline t$ in the various terms in \eqref{id1}. For given $\underline t=(t_0,\dots,t_{2n}) \in \R_+^{2n+1}$ and $\pi \in \mathcal{W}_{2n}$, the function $\sum_{j=0}^{2n} t_j \mathscr{E}^{(\pi,j)}_{P}(\underline k)  $ is of the form 
\begin{equation}
\sum_{j=0}^{2n} t_j \mathscr{E}^{(\pi,j)}_{P}(\underline k)  = \mathscr{Q}(P,\underline k) + \sum_{j=1}^n s_j \omega(k_j)
\end{equation}
for a rotation-invariant positive quadratic form $\mathscr{Q}$ and positive parameters $s_j$. The claimed complete monotonicity is thus an immediate consequence of \eqref{murs2}. \hfill\qed

\subsection{Proof of Theorem~\ref{thm:1}(b)} 

The starting point is again \eqref{2rep}, but we shall change back the order of integration and summation (which is allowed for small enough $\gamma$, as shown in the proof of Theorem~\ref{thm:2} in Section~\ref{sec:dyson}). That is, we have
\begin{equation}
\langle\Omega | \eu^{-t \mathbb{H}(P) } \Omega \rangle   = \eu^{-t |P|^2} +  \frac {\eu^{t a}  }{2\pi \mi} \int_{\Gamma_{\gamma,\kappa}}  \frac{\eu^{- t z} }{z-|P|^2-a}  \sum_{\nu \in \mathcal{D}} \Lambda_z (\nu)\, \di z   \,.\label{2repa}
\end{equation}
Recall the definition of the Dyck paths $\mathcal{D}_n$ in \eqref{def:Dn}, and the one of $\Lambda_z$ in \eqref{def:Lambda}. 
Given $\nu \in \mathcal{D}_{n}$ and $\nu' \in \mathcal{D}_{n'}$, we can concatenate them to $\{\nu,\nu'\} \in \mathcal{D}_{n+n'}$. 
We have
\begin{equation}
\Lambda_z(\{\nu,\nu'\}) = \Lambda_z(\nu)\Lambda_z(\nu') 
\end{equation}
and, if we set $\Lambda_z(\emptyset) = 1$, this continues to hold for $\nu=\emptyset$. Every $\nu\in \mathcal{D}_n$ can be uniquely written as $\nu=\{ \tilde\nu,\nu_0\}$ where $\nu_0 \in \mathcal{D}_m^0$ for some $1\leq m\leq n$ and $\tilde\nu \in \mathcal{D}_{n-m}$ (being possibly empty), where $\mathcal{D}_n^0 \subset\mathcal{D}_n$ denotes those Dyck paths that return to the origin only at the end, i.e., satisfy $\sum_{j=1}^k \nu_j <0$ for all $1\leq k \leq 2n-1$.
Hence, with $\mathcal{D} = \bigcup_{n\geq 1} \mathcal{D}_n$ (and similarly for $\mathcal{D}^0$) \begin{equation}
 \sum_{\nu \in \mathcal{D}} \Lambda_z(\nu) = \left( 1+   \sum_{\tilde\nu \in \mathcal{D} } \Lambda_z(\tilde\nu) \right)  \sum_{\nu_0 \in \mathcal{D}^0} \Lambda_z(\nu_0)\,.
\end{equation}

For two functions $f_1$ and $f_2$ analytic away from the positive half-line (and satisfying suitable growth bounds), we have
\begin{align}\nonumber
& \frac 1{2\pi \mi} \int_{\Gamma_{\gamma,\kappa}} \eu^{-t z} f_1(z) f_2(z) \di z \\ & = \int_0^t  \left( \frac 1{2\pi \mi} \int_{\Gamma_{\gamma,\kappa}} \eu^{-(t-s) z} f_1(z)  \di z \right) \left( \frac 1{2\pi \mi} \int_{\Gamma_{\gamma,\kappa}} \eu^{-s z'} f_2(z')  \di z' \right) \di s \,.
\end{align}
We shall apply this with $f_1(z) = (z-|P|^2-a)^{-1}(1 + \sum_{\nu\in\mathcal{D}}\Lambda_z(\nu))$ and $f_2(z) = \sum_{\nu \in \mathcal{D}^0} \Lambda_z(\nu)$. Using also \eqref{2repa} with $t$ replaced by $s$, we find
\begin{equation}
\langle\Omega | \eu^{-t \mathbb{H}(P) } \Omega \rangle   = \eu^{-t |P|^2} +  \int_0^t  \langle\Omega | \eu^{-(t-s) \mathbb{H}(P) } \Omega \rangle  
 \frac {\eu^{sa}}{2\pi \mi} \int_{\Gamma_{\gamma,\kappa}}  \eu^{- s z} \sum_{\nu \in \mathcal{D}^0} \Lambda_z(\nu)\, \di z \, \di s   \,.
\end{equation}
Lemma~\ref{key:lem2} implies that
\begin{equation}
\sum_{\nu\in\mathcal{D}^0_n} \Lambda_z(\nu) = 
 \sum_{\pi \in \mathcal{W}_{2n}^0}  \int_{\R^{dn} }  \mathscr{S}^\pi_P(z, \underline k) \prod_{j=1}^n |v(k_j)|^2  \di k_j
\end{equation}
where $\mathcal{W}_{2n}^0 \subset \mathcal{W}_{2n}$ are those Wick pairings such that $j\mapsto (-1)^{\pi^{-1}(j)}$ is in $\mathcal{D}^0_{n}$, which are precisely the interlacing Wick pairings defined in the introduction. 
The desired identity \eqref{id2} then follows by proceeding as in Subsection~\ref{proof1} above (compare with \eqref{46}). \hfill\qed

\subsection{Proof of Corollary~\ref{cor}(b)}
As already mentioned in the introduction, concavity of $|P|^2\mapsto E_0(P)$ is an immediate consequence of part (a) in combination with \eqref{limE}. 

For $v\not \equiv 0$ a simple variational argument shows that $E_0(P)<|P|^2$ for all $P\in\R^d$. If we assume that $\mathbb{H}(P)$ has a ground state $\Phi_P$, then
\begin{equation}
\lim_{t\to \infty} \eu^{t E_0(P)} \langle \Omega | \eu^{-t \mathbb{H}(P)} \Omega\rangle = | \langle \Omega | \Phi_P \rangle |^2 > 0
\end{equation}
where the strict positivity follows from the fact that $\eu^{-t \mathbb{H}(P)}$ is positivity improving on a suitable cone containing the vacuum \cite{MiY10}. (The statement is proved in \cite{MiY10} for the Fr\"ohlich model, but the method of proof applies to the general case considered here.)  
We can thus take $t\to \infty$ in \eqref{id2} and obtain, using dominated convergence,
\begin{equation}
1 =  \int_0^\infty\eu^{s E_0(P)}   
\sum_{n\geq 1} \sum_{\pi \in \mathcal{W}^0_{2n}}  \int_{\Delta_{2n-1}^s} \int_{\R^{dn} }  \eu^{-\sum_{j=0}^{2n-1} t_j \mathscr{E}^{(\pi,j)}_{P}(\underline k) }  \prod_{j=1}^n |v(k_j)|^2  \di k_j \,\di\underline t  \, \di s  \,.
\end{equation}
From the complete monotonicity of the terms multiplying $\eu^{s E_0(P)}$ in the integrand, it is then easy to see that H\"older's inequality implies that  $|P|^2\mapsto E_0(P)$ is strictly concave (as  argued in \cite[Cor.~8]{polzer}), using again Bernstein's theorem.  \hfill\qed

\appendix 
\section{The Lieb--Yamazaki Argument}\label{appendix:B}

In this section, we shall recall the Lieb--Yamazaki commutator method introduced in \cite[Sect.~2]{LY} to prove that if $v$ and $\omega$ satisfy Assumption~\ref{ass1}, $\Phi(v)$ is infinitesimally form bounded with respect to the non-interacting Hamiltonian $|P-\mathbb{P}_f|^2 + \di\Gamma(\omega)$.
On the one hand, a simple Cauchy--Schwarz inequality shows that 
\begin{equation}
\pm \Phi(v) \leq \eps \di\Gamma(\omega) + \eps^{-1} \int_{\R^d} \frac {|v(k)|^2}{\omega(k)} \di k
\end{equation}
for any $\eps>0$, which we can apply to $v=v_1$. 
On the other hand, using that
\begin{equation}
\Phi(v) = \mi [ P - \mathbb{P}_f , \Phi(w) ] 
\end{equation}
for $w(k) =\mi k v(k) /|k|^2$ (where an inner product between the vectors $P - \mathbb{P}_f$ and $w$ is understood on the right-hand side), a Cauchy--Schwarz inequality implies that
\begin{equation}
\pm \Phi(v) \leq \eps |P- \mathbb{P}_f|^2 + \eps^{-1} |\Phi(w)|^2\,.
\end{equation}
We can further bound
\begin{equation}
\Phi(w)^2  \leq 4 a^\dagger(w) a(w) + 2 \| w\|^2 \leq \lambda \di\Gamma(\omega) + 2 \int_{\R^d} \frac{ |v(k)|^2}{|k|^2} \di k
\end{equation}
for $\lambda = 4 \int |w(k)|^2 / \omega(k) \di k = 4 \int |v(k)|^2/(|k|^2 \omega(k)) \di k$. We shall apply this bound to $v_2$. By suitably adjusting how to split $v$ into the two components $v = v_1 + v_2$, we can assume that 
$\lambda$ is as small as desired, which implies the claimed infinitesimal bound relative to $|P-\mathbb{P}_f|^2 + \di\Gamma(\omega)$.

%\bigskip
%The authors have no relevant financial or non-financial interests to disclose. No datasets were generated or analyzed during the current study.

\end{document}